\documentclass[pra,twocolumn,showpacs,nofootinbib,superscriptaddress,notitlepage]{revtex4-1}
\usepackage{amsmath,empheq}
\usepackage{amssymb,bm}
\usepackage{amsthm,color,xcolor,dsfont}
\usepackage{graphicx} 
\usepackage{xcolor}
\usepackage{epstopdf}
\usepackage[colorlinks=true, hyperindex, breaklinks, linkcolor=blue, urlcolor=blue, citecolor=blue]{hyperref} 
\usepackage{ulem}
\usepackage{caption}
\usepackage[labelformat=simple]{subcaption}

\usepackage{cleveref}
\usepackage{mathtools}
\usepackage{appendix}

%
%
%


\usepackage{xy}
\xyoption{matrix}
\xyoption{frame}
\xyoption{arrow}
\xyoption{arc}

\usepackage{ifpdf}
\usepackage{bbding}
\ifpdf
\else
\PackageWarningNoLine{Qcircuit}{Qcircuit is loading in Postscript mode.  The Xy-pic options ps and dvips will be loaded.  If you wish to use other Postscript drivers for Xy-pic, you must modify the code in Qcircuit.tex}
\xyoption{ps}
\xyoption{dvips}
\fi

\entrymodifiers={!C\entrybox}

\newcommand{\ket}[1]{{\left\vert{#1}\right\rangle}}
\newcommand{\qw}[1][-1]{\ar @{-} [0,#1]}



\newcommand{\gate}[1]{*+<.6em>{#1} \POS ="i","i"+UR;"i"+UL **\dir{-};"i"+DL **\dir{-};"i"+DR **\dir{-};"i"+UR **\dir{-},"i" \qw}
\newcommand{\gategroup}[6]{\POS"#1,#2"."#3,#2"."#1,#4"."#3,#4"!C*+<#5>\frm{#6}}




\newcommand{\Qcircuit}{\xymatrix @*=<0em>}



\normalem

\DeclarePairedDelimiter\ceil{\lceil}{\rceil}



 

\newcommand{\e}{\epsilon}

 \newtheorem{lemma}{Lemma}

\begin{document}

\title{Thresholds for universal concatenated quantum codes}

\author{Christopher Chamberland}
\email{c6chambe@uwaterloo.ca}
\affiliation{
    Institute for Quantum Computing and Department of Physics and Astronomy,
    University of Waterloo,
    Waterloo, Ontario, N2L 3G1, Canada
    }
\author{Tomas Jochym-O'Connor}
\email{trjochym@uwaterloo.ca}
\affiliation{
    Institute for Quantum Computing and Department of Physics and Astronomy,
    University of Waterloo,
    Waterloo, Ontario, N2L 3G1, Canada
    }
\author{Raymond Laflamme}
\affiliation{
    Institute for Quantum Computing and Department of Physics and Astronomy,
    University of Waterloo,
    Waterloo, Ontario, N2L 3G1, Canada
    }
\affiliation{Perimeter Institute, Waterloo, Ontario, N2L~2Y5, Canada}
\affiliation{Canadian Institute For Advanced Research, Toronto, Ontario, M5G~1Z8, Canada}

\begin{abstract}
Quantum error correction and fault-tolerance make it possible to perform quantum computations in the presence of imprecision and imperfections of realistic devices. An important question is to find the noise rate at which errors can be arbitrarily suppressed. By concatenating the 7-qubit Steane and 15-qubit Reed-Muller codes, the 105-qubit code enables a universal set of fault-tolerant gates despite not all of them being transversal. Importantly, the CNOT gate remains transversal in both codes, and as such has increased error protection relative to the other single qubit logical gates. We show that while the level-1 pseudo-threshold for the concatenated scheme is limited by the logical Hadamard, the error suppression of the logical CNOT gates allows for the asymptotic threshold to increase by orders of magnitude at higher levels. We establish a lower bound of~$1.28 \times 10^{-3}$ for the asymptotic threshold of this code which is competitive with known concatenated models and does not rely on ancillary magic state preparation for universal computation.
\end{abstract}

\pacs{03.67.Pp}

\maketitle

\section{Introduction}
Quantum computers have the potential to greatly enhance the efficiency of certain computational problems. However, they rely on the storage and manipulation of many quantum systems in superposition, and it is this careful juxtaposition of storage and manipulation of the quantum states that renders their development to be so difficult. Namely, by making individual quantum systems easily accessible to control often leads to increased external noise. Suppressing noise in a scalable manner is thus a necessary requirement for any quantum computing architecture, promoting the need for quantum error correction and fault-tolerance. 

Quantum error correcting codes come in many different forms, yet the key to any error correcting scheme is the establishment of a fault-tolerance threshold~\cite{Shor96, AB97, Preskill98, KLZ98}. Concatenated codes have played a key role in determining these threshold rates due to their ability to iteratively suppress errors by increasing levels of concatenation. Along these lines, Aliferis, Gottesman, and Preskill established a rigorous lower bound for the fault-tolerance threshold of concatenated codes by introducing a technique called malignant set counting~\cite{AGP06}. Paetznick and Reichardt used this method to establish a circuit level noise threshold for the 23-qubit Golay~code under physical depolarizing noise, obtaining a threshold error rate of~$1.32 \times 10^{-3}$.  

One of the most prominent methods for implementing a logical fault-tolerant gate is by implementing the gate transversally, that is by applying individual physical gates to each of the qubits composing the logical qubit. However, as shown by Eastin and Knill, the set of transversal gates for a given code generates a finite group, and therefore cannot be universal for quantum computation~\cite{EK09}. In order to circumvent this fundamental restriction and potentially reduce the qubit overhead seen in magic state distillation~\cite{BK05}, many recent fault-tolerant proposals for universal logical quantum computation have focused on code conversion and gauge fixing~\cite{PR13,ADP14, Bombin14, BC15, JB16, JBH15}. In this work, we study a parallel construction for universal fault-tolerant quantum computation through the concatenation of two error correcting codes~\cite{JL14}. The idea behind this scheme is to protect the gate that is not transversal for one code through its implementation using transversal gates in the other code. The concatenation scheme provides a dual protection for the purposes of universal fault-tolerant quantum logic.  

In this work, we establish a lower bound on the fault-tolerant threshold for the 105-qubit universal concatenated code under depolarizing noise. We show that the dual protection coming from the concatenation of two different error correcting codes provides more than just minimal fault-tolerant protection, it serves as a means for logical error suppression at the second level of concatenation~(and above). We believe that this provides new insights in the development of quantum error correcting codes, and emphasizes an important principle: to logically protect the quantum gates that are most present in the fault-tolerant architecture.

\section{Concatenated 105-qubit scheme} \label{Concatenated 105-qubit scheme}

\begin{figure}[h]
\centering
\begin{subfigure}{0.4\textwidth}
\includegraphics{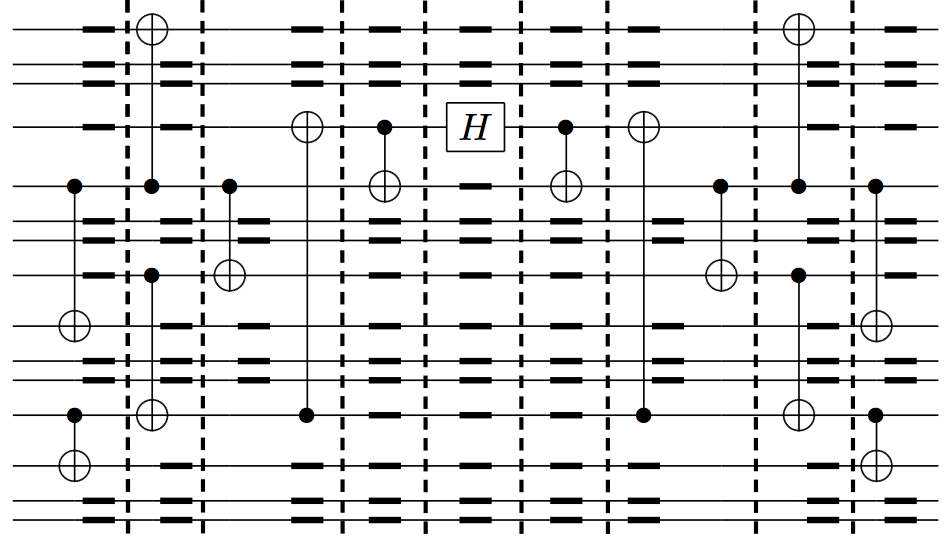}
\caption{}
\label{fig:HadCircuit}
\end{subfigure}
\begin{subfigure}{0.4\textwidth}
\begin{align*}
\Qcircuit @C=2em @R=1.4em {
& \gate{\text{LEC}} & \gate{G} & \gate{\text{TEC}} & \qw
\gategroup{1}{2}{1}{4}{0.7em}{--}
}
\end{align*}
\caption{}
\label{fig:exRecCircuit}
\end{subfigure}
\caption{\subref{fig:HadCircuit}~Logical Hadamard~$H$ circuit for $[[15,1,3]]$~Reed-Muller code. The bold dark lines represent resting qubits subject to storage errors. The dotted vertical lines are used to separate the time steps for which gates are applied in parallel. Logical~$H$ for the 105-qubit code is implemented fault-tolerantly by applying each non-fault-tolerant logical~$H$ gates in parallel. \subref{fig:exRecCircuit}~Extended rectangle consisting of leading and trailing error correcting circuits implementing the desired logical gate $G$.}
\label{fig:Circuits}
\end{figure}

We begin by briefly reviewing the 105-qubit concatenated scheme for universal fault-tolerant logical gates~\cite{JL14}. The logical information is encoded through the concatenation of an outer and inner quantum code, that is each logical qubit of the outer code is in turn encoded into the code of the inner code. In the 105-qubit code, the outer code is the 7-qubit Steane code, which has the properties of having transversal Clifford operations. The inner code is the 15-qubit Reed-Muller code, which contains CNOT and $T = \text{diag}(1, e^{i\pi/4})$ as its transversal gates. The gate set generated by Clifford~+~$T$ is universal for quantum computation~\cite{BBC+95}. The overall code is a~$[[105,1,9]]$ code encoding a single logical qubit in 105 qubits with distance~9. Since CNOT and the phase gate~$S = T^2$ is transversal for both codes, both gates will remain transversal when the two codes are concatenated. Logical Hadamard~$H$ is obtained by applying a non-transversal logical Hadamard to each of the encoded 15-qubit codeblocks. Although not fault-tolerant on each 15-qubit codeblock with a single error potentially leading to a logical error, due to the protection of the 7~qubit code, a single error will never result in a global logical fault and will remain correctable. Figure~\ref{fig:HadCircuit} summarizes the application of the logical~$H$ gate on a 15-qubit codeblock. Note that the circuit construction was optimized using only 14 CNOT gates with a circuit depth of 9 time steps. It might be possible to find a circuit using fewer gates and a better depth. 

Logical~$T$ is constructed by choosing a sequence of logical CNOT and $T$~gates to be implemented on the 7-qubit outer code (see Fig.~\ref{fig:LogicalTgateCircuit} in the Supplementary Material). While this operation is not fault-tolerant on the outer code as errors can be spread between codeblocks, the underlying logical gates are transversal on each of the 15-qubit codeblocks. As such, any single error may result in multiple single errors spread across different codeblocks, and will remain correctable by the error correction of each of the 15-qubit codeblocks. 

\section{Fault-tolerance threshold theorem}

The key property of fault-tolerant architectures is the presence of an \emph{asymptotic threshold}. For concatenated coding schemes, the asymptotic threshold is the physical error rate $p_{th}$ such that for physical error rates~$p < p_{th}$ the logical error rate can be made arbitrarily small for sufficiently large number of concatenation levels (and the overall time/space resource overhead scales as~$\mathcal{O}(\text{poly}(\log{(A/\epsilon)})A)$, where $A$ would be the required resources for a noiseless circuit).

All currently known fault-tolerant schemes for quantum logic require active error correction between logical gates. Error correction steps are interleaved between the implementation of various fault-tolerant gates. In this study, fault-tolerant syndrome measurement and error correction is implemented using Steane's method~\cite{Steane97}. At a given concatenation level, each component of the logical circuit (gates and error detection/measurement) will be themselves composed of many operations from the previous level of concatenation. These components include state preparation and measurement, logical gates and memory locations. We consider a depolarizing model for each physical location (level-0) in the circuit. Depolarizing noise is modelled in a similar manner to that of Paetznick and Reichardt in their study of the 23-qubit Golay code~\cite{PR12}. Each single qubit gate (including resting qubits) undergoes Pauli noise with probability~$p/4$ for each Pauli operation, and each two qubit gate undergoes two-qubit Pauli noise with probability~$p/16$ for each non-trivial two-qubit Pauli. Under this noise, state preparation in the stabilizer $Z \ (X)$~basis is flipped from $\ket{0} \ (\ket{+})$ to $\ket{1} \ (\ket{-})$ with probability~$p/4$. Similarly, measurement in the stabilizer $Z \ (X)$~basis is flipped with probability~$p/4$. 

As first proposed by Aliferis~\emph{et al.}~\cite{AGP06} we analyze logical gates by considering the whole as an extended rectangle (\emph{exRec}), that is the logical gate itself along with its leading~(LEC) and trailing~(TEC) error correction circuits (see Figure~\ref{fig:exRecCircuit}). In order to characterize the rate at which logical errors occur, we define \emph{malignant} error events. Let $\ket{\psi_1}$ be a single or two-qubit logical state obtained by applying ideal decoders immediately after the LEC circuit and $\ket{\psi_2}$ the state obtained by applying ideal decoders immediately after the TEC. We define the event $\text{mal}_{E}$ as $\ket{\psi_2} = EU\ket{\psi_1}$ where $E$ is a single or two-qubit logical Pauli error and $U$ is the desired gate. We denote the malignant logical error~$E$ present at the output of the circuit by~$\text{mal}_{E}$. In what follows we will be interested in obtaining estimates of the probability that the event $\mathrm{mal}_{E}$ occurs for the CNOT, Hadamard and $T$ gate. 

We use Monte-Carlo sampling in order to determine the probability of each malignant event given an underlying physical depolarizing model. Given~$N$ simulations of the logical gate $G$ at a physical error rate~$p$, we track the number of malignant faults~$a_E(\epsilon)$ of each error type~$E$, and estimate the probability of a given logical fault as~$\text{Pr}[\text{mal}_E | G,  p] = a_E/N$. The estimate of $\text{Pr}[\text{mal}_E | G, \ p]$ improves as the number of iterations~$N$ increases by reducing the standard deviation. Using a least-squares fitting to determine the error probability as a function of input depolarizing error rate, we can determine the \emph{pseudo-threshold} for each of the logical operations for our error-correcting code. For a level-1 exRec encoding the logical gate~$G$, we define the pseudo-threshold as the crossing point ~$p = p_G^{(1)}(p)$, where $p_G^{(1)}(p)=\sum_{E_i} \text{Pr}[\text{mal}_{E_i} | G, p] $ for all possible logical Pauli errors~$E_i$ for a given logical gate~$G$. Intuitively, the pseudo-threshold corresponds to the error rate below which the logical error rate at level-1 is guaranteed to be lower than the physical error rate. In all previously studied error correction codes, the pseudo-threshold was conjectured to be an upper bound on the asymptotic threshold~\cite{SCCA06, PR12}. In this work we show that this intuitive bound does not necessarily have to hold and that the asymptotic threshold can be much larger than the pseudo-threshold. To our knowledge this is the first exhibition of this type of logical error behaviour and is fundamentally related to the structure of the underlying 105-qubit error correcting code.

At each location of the level-one exRec, errors are introduced following the depolarizing noise model with noise strength $p$. Since the logical gates in question are fault-tolerant, a logical fault can only occur if a sequence of failures occur at the physical level. Namely, we can upper bound the failure probability for each logical fault~$E$ as follows:
\begin{align}
\text{Pr}[\text{mal}_E^{(1)}| G,p] \le \sum_{k = \ceil{\frac{d^*}{2}}}^{L_G} c(k) p^k  =: \Gamma_G^{(1)},
\label{eq:Gamma1}
\end{align}
where the coefficients $c(k)$ are positive integers that parametrize the number of possible weight-$k$ errors that can lead to a logical fault, $L_G$~is the total number of circuit locations in the logical gate~$G$, and $d^*$~characterizes the minimal distance of a given logical gate (that is $\ceil{d^*/2}$~is the minimum weight error that must occur to produce a logical fault). For example in the 105-qubit code, the logical CNOT gate has $d^* = 9$, while the Hadamard and $T$~logical gates have~$d^* = 3$ since they sacrifice some of the distance of the code due to the fact that they are not globally transversal. As was shown in \cite{PR12}, the polynomial $\Gamma^{(1)}(p)$ is monotone non-decreasing making its construction straightforward with the role of upper bounding the logical error probabilities of all the logical operations~$G$ at level-1.

We can then generalize this notion to the level~$l$ concatenation level, where each of the physical locations are replaced by logical~exRec locations of the $(l-1)$~level. Taking the worst case error rate for the $(l-1)$~logical components, the error rate of logical gates at the $l$-th concatenation level can be bounded as follows:
\begin{align}
\text{Pr}[\text{mal}_E^{(l)}| G,p] \le \sum_{k = \ceil{\frac{d^*}{2}}}^{L_G} c(k) \left(\Gamma_G^{(l-1)}\right)^k  =: \Gamma_G^{(l)},
\label{eq:Gammal}
\end{align}
where the polynomials given by the coefficients~$c(k)$ remain the same as the logical gate is composed of the same operations, just replacing physical locations with logical exRecs from the previous concatenation level.

Finally, we generalize a claim of Ref.~\cite{PR12} required to show the suppression of errors for level-2 and higher concatenation levels when below the fault-tolerance threshold~$p_{th}$. Importantly, there exists a $p_{th}$ such that the upper bound on the level-2 logical error probability will be lower than that of level-1, that is $\Gamma_G^{(2)} \le \e \Gamma_G^{(1)}$, and the following will hold for all concatenation levels~$m \ge 2$:
\begin{align}
\text{Pr}[\text{mal}_E^{(m)}| G,p] \le \Gamma_G^{(m)} \le \epsilon^{\ceil{\frac{d^*}{2}}^{m-2}+1} \Gamma_G^{(1)},
\label{eq:asymptoticthresh}
\end{align}
that is the error rate is exponentially suppressed below the crossing point of~$\Gamma_G^{(1)}$ and~$\Gamma_G^{(2)}$, thus providing a lower bound for the asymptotic threshold for the logical gate~$G$. The proof in full generality is presented in Supplementary Material~\ref{app:thresholdlowerbound}.

\section{Concatenated 105-qubit thresholds}
At the level-1 encoding, the logical gate exhibiting the lowest pseudo-threshold is the Hadamard gate~$H$. Due to the complexity of the individual logical Hadamard gates arising on each of the 15-qubit codeblocks, many errors propagating from the different individual gate locations could lead to logical faults on that codeblock. The predominant error occurs when two codeblocks contain a logical fault. The logical error that occurs with the highest probability~$\text{Pr}[\text{mal}_{E} | H, p]$ is a logical~$X$. This can be understood from the sensitivity of the circuit encoding the Hadamard gate to input $Z$ errors from the LEC which have a high probability of leading to a logical error. If any of the input $Z$ errors land on the target qubit of the CNOT gates in the Hadamard encoding circuit, they will propagate to the physical Hadamard gate on the fourth qubit (see Figure~\ref{fig:HadCircuit}) resulting in a logical $X$ error. 

Unlike the logical Hadamard, the leading level-1 logical error for both CNOT and $T$  arise from logical $Z$ errors rather than $X$ errors. This stems from the asymmetry in stabilizer generators of the 15-qubit code resulting in an increased protection against $X$ errors. Due to the transversality of the logical~CNOT gate in both codes and since there are fewer ways for errors to propagate in the implementation of the logical~$T$, these gates have a better pseudo-threshold relative to logical~$H$.

In order to lower bound the level-1 pseudo-threshold, the probability of all logical error types are summed for each of the logical gates and bounded as in Eq.~\ref{eq:Gamma1}. The resulting polynomials are compared to the input physical error rate and their crossing point determines the pseudo-threshold (see Fig.~\ref{fig:PseudoAndLevelThreeThresholds} in the Supplementary Material). The resulting values are presented in Table~\ref{tab:Pseudo-and-asymptotic}.

\begin{table}
\begin{centering}
\begin{tabular}{|c|c|c|}
\hline 
 & Pseudo-Threshold & Asymptotic threshold\tabularnewline
\hline 
\hline 
CNOT gate & $\left(2.11\pm0.02\right)\times10^{-3}$ & $\left(1.95\pm0.01\right)\times10^{-3}$\tabularnewline
\hline 
$T$ gate & $\left(4.89\pm0.11\right)\times10^{-4}$ & $\left(1.58\pm0.02\right)\times10^{-3}$\tabularnewline
\hline 
Hadamard gate & $\left(4.47\pm0.29\right)\times10^{-5}$ & $\left(1.28\pm0.02\right)\times10^{-3}$\tabularnewline
\hline 
\hline
{\bf 105-qubit} & $\mathbf{\left(4.47\pm0.29\right)\times10^{-5}}$ & $\mathbf{\left(1.28\pm0.02\right)\times10^{-3}}$\tabularnewline
\hline 
{\bf 23-qubit Golay} & $\mathbf{\left(1.73\right)\times10^{-3}}$ & $\mathbf{\left(1.32\right)\times10^{-3}}$\tabularnewline
\hline 
\end{tabular}
\par\end{centering}
\caption{\label{tab:Pseudo-and-asymptotic}Lower bounds for the pseudo and asymptotic threshold results for the Hadamard, $T$ gate and CNOT gates. The Hadamard asymptotic-threshold is larger than its pseudo-threshold resulting from the double protection of the CNOT gates as seen by the high CNOT pseudo-threshold. In bold, the overall thresholds for the 105-qubit and 23-qubit codes are compared.}
\end{table}

It is important to observe that the CNOT pseudo-threshold is nearly two orders of magnitude larger than the Hadamard pseudo-threshold. Furthermore, all other operations in our circuits (resting qubits, measurement in the $X$ and $Z$ basis and state preparations) are upper bounded by level one polynomials that have larger pseudo-thresholds than CNOT.
The dominant set of errors leading to logical faults in the level-1 Hadamard gate is a result of input errors from the LEC as well as failures in the CNOT gates within the 15-qubit Hadamard codeblocks. These components are composed of only memory, CNOT, $X$ and~$Z$ basis state preparation and measurement locations. Since the level-1 logical error probability of these gates will be much smaller in the level-2 Hadamard exRec, detrimental faults will be much less likely to occur. Hence, there will be error rates~$p$ above the pseudo-threshold~$p_{1,H}$ such that the level-2 error polynomials characterizing the logical error rate will be below the level-1 bounding polynomial, 

\begin{align}
\Gamma^{(2)}(p)\leq \Gamma^{(1)}(p), \ \forall \ p  \le p_{2,H},
\label{eq:AsymptoticCond}
\end{align}
where $p_{2,H} > p_{1,H} $. The error rate~$p_{2,H}$ is the threshold rate below which all level-2 logical gates have a lower error rate compared to the level-1 logical error rate. As shown in Ref.~\cite{PR12} and argued in the previous section, the value~$p_{2,H}$ serves as a lower bound for the asymptotic threshold~$p_{th}$. 

\begin{figure}
\centering
\begin{subfigure}{0.4\textwidth}
\includegraphics[width=\textwidth]{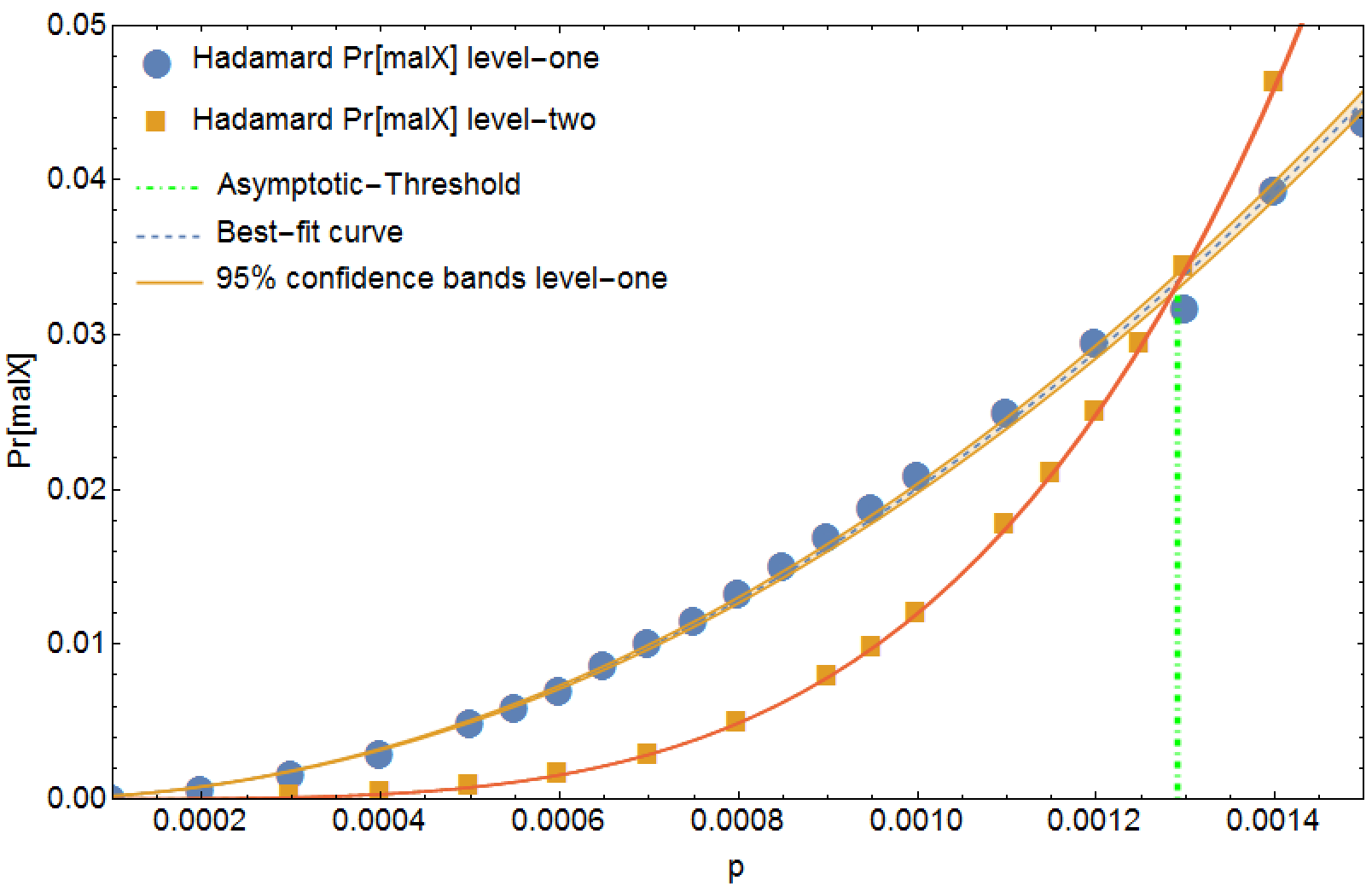}
\caption{}
\label{fig:AsymptoticHadamard}
\end{subfigure}
\begin{subfigure}{0.4\textwidth}
\includegraphics[width=\textwidth]{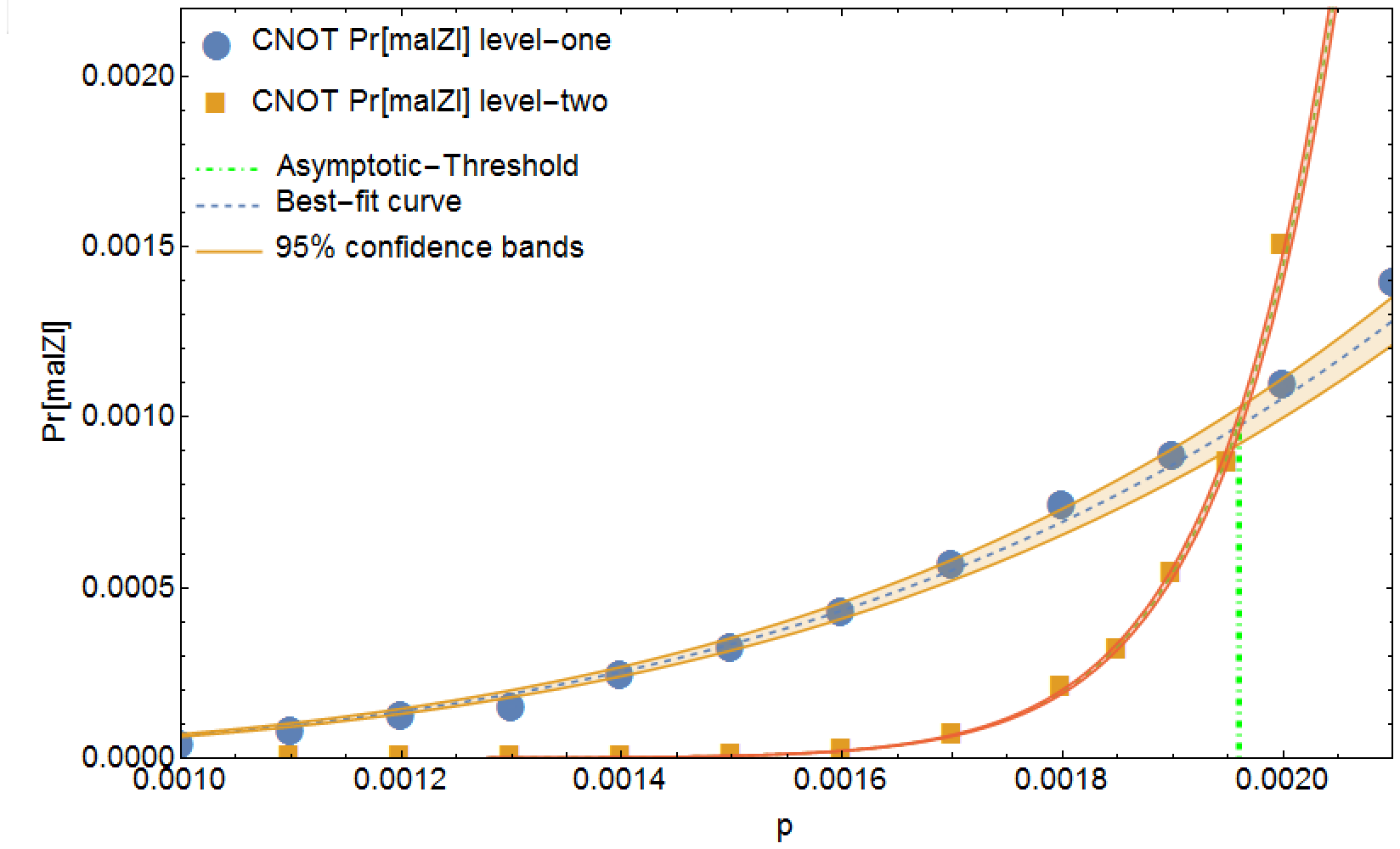}
\caption{}
\label{fig:AsymptoticCNOT}
\end{subfigure}
\caption{Probability of logical error as function of physical error rate for the level-1 and level-2 logical \subref{fig:AsymptoticHadamard}~Hadamard and \subref{fig:AsymptoticCNOT}~CNOT. The crossing point of the fitted curve allows for the determination of a lower bound for the asymptotic threshold for each of the logical gates. The CNOT gate exhibits a much lower logical error rate than the Hadamard at the first level.}
\label{fig:AsymptoticThresh}
\end{figure}

In previous studies of asymptotic thresholds for the Golay and 7-qubit CSS codes, the CNOT exRec provided a lower bound on the threshold value since it contained the largest amount of locations relative to all the other gates in the universal gate set~\cite{AGP06,PR12,CDT09}. Since the CNOT exRec is itself composed entirely of gates that are transversal, as the error rate approaches the pseudo-threshold value, certain malignant events (for example, the probability of getting a logical $ZI$ error at the output of the CNOT circuit, as can be seen in Fig.~\ref{fig:LevelOneCNOT}) become more likely to occur than the level-zero probabilities determined from the depolarizing noise model. Recall that the pseudo-threshold was conjectured to be an upper bound on the asymptotic threshold value. However, it is the CNOT locations that are the leading contributors to logical errors. Consequently, the pseudo-threshold of the CNOT gate, as opposed to the $H$~and~$T$ gates, will be the limiting factor to the asymptotic threshold. As argued above, this will give rise to reduced logical error rates of the~$H$ and~$T$ gates at the second level of concatenation, and using Eq.~\ref{eq:asymptoticthresh}, a lower bound for the asymptotic threshold~$p_{th}$ can be determined. The plots in Fig.~\ref{fig:AsymptoticThresh} illustrate the level-1 and level-2 polynomials upper bounding the logical error rates at the first and second level for the Hadamard and CNOT gate circuits (see Fig.~\ref{fig:PseudoAndLevelThreeThresholds} for the corresponding $T$~gate plots). As expected, the CNOT exRec contains a lower asymptotic threshold value given by $\left(1.95\pm0.01\right)\times10^{-3}$. The Hadamard exRec limits the threshold value of the 105-qubit code to be $\left(1.28\pm0.02\right)\times10^{-3}$. Interestingly, the level-two polynomials satisfy~Eq.~\ref{eq:asymptoticthresh} for error rates nearly 30 times larger than their corresponding level-one polynomials. This is a distinctive feature of the 105-qubit concatenated scheme and clearly demonstrates the impact of having an exRec primarily composed of gates which are transversal in both codes with much larger pseudo-threshold rates. The asymptotic threshold derived for the 105-qubit code compares favourably to the $[[23,1,7]]$~Golay code studied under the same depolarizing error model and metric for gate failures under malignant set counting~\cite{PR12}. This scheme does not require magic state distillation in order to achieve fault-tolerance and may lead to reduced overhead~\cite{FMMC12}. Determining the resource overhead remains an interesting open problem.

\section{Conclusion}
In this work, we established the first rigorous lower bound on the asymptotic threshold for the concatenated 105-qubit code. We show that the pseudo-threshold value of $\left(4.47\pm0.29\right)\times10^{-5}$ arising from the $H$ gate is significantly improved at higher levels of concatenation yielding a lower bound on the asymptotic threshold value of $\left(1.28\pm0.02\right)\times10^{-3}$. The increase in asymptotic threshold is primarily due to the relatively high threshold of the logical CNOT gate.  We believe that this non-traditional behaviour of logical error probabilities at higher concatenation levels is an interesting property of the studied scheme and points to an interesting direction for future error correction research. Due to the high concentration of CNOT gates for the purposes of error detection, we believe that tailoring codes to correct for logical errors in encoded CNOT gates at the expense of perhaps noisy single qubit gates would allow for higher asymptotic thresholds for concatenated codes.

\section{Acknowledgements}
T.~J.~would like to acknowledge the support of NSERC and the Vanier-Banting Secretariat through the Vanier~CGS. This work was supported by CIFAR, NSERC, and Industry~Canada. C. C. would like to thank Hemant Katiyar for useful discussions.

\bibliographystyle{ieeetr}
\bibliography{bibtex_jochym}

\newpage
\appendix
\renewcommand\thefigure{S\arabic{figure}} 
\renewcommand\theequation{S\arabic{equation}} 
\setcounter{figure}{0}
\setcounter{equation}{0}      

\section{Error type and detailed threshold analysis}
\label{app:ErrorType}
In this section we provide a more detailed threshold analysis for the CNOT, $H$ and $T$ gates. Furthermore, we provide details on the contributions from different error types. 

\begin{figure}[h]
\centering
\includegraphics{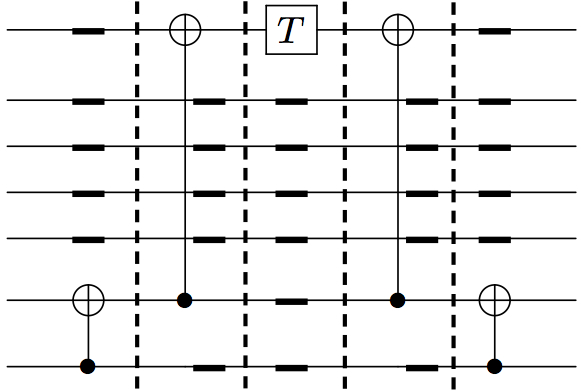}
\caption{Logical $T$ gate circuit for the 105-qubit concatenated code. The $T$ gate is applied transversally on the 15-qubit codeblocks. A single error at any location in the above circuit can result in a single error on multiple codeblocks which will be corrected at the 15-qubit level. Consequently, the logical $T$ gate circuit is fault-tolerant at the 105-qubit level.}
\label{fig:LogicalTgateCircuit}
\end{figure}

Fig.~\ref{fig:LogicalTgateCircuit} illustrates the $T$ gate circuit construction for the 105-qubit concatenated code. Notice that compared to the Hadamard circuit constrcution (Fig.~\ref{fig:HadCircuit}), there are much fewer locations where errors can propagate leading to a logical fault on multiple codeblocks. Since the 105-qubit code is more efficient at correcting $X$ errors, and $X$ errors propagating through a physical $T$ gate location transforms as $TXT^{\dagger}=X(I + iZ)/\sqrt{2}$ (leading to a $Z$ error contribution), we expect the probability of obtaining a logical $Z$ error at the output of the $T$ gate circuit to be much higher than the probability of obtaining a logical $X$ error. In fact, our simulations showed that the level-one logical $X$ error probability could be upper bounded by $10^{-7}$ for all considered error rates. 

Given the large number of locations in our encoding circuits, it is computationally intractable to compute exact upper bounds on the probability of failure for the different subcomponents of an exRec as was done in Ref.~\cite{PR12}. Instead we use a Monte-Carlo sampling technique to estimate the probability of failure of a particular exRec. At the first level of concatenation, we insert $X$, $Y$ or $Z$ Pauli errors at each physical locations of the exRec with a probability governed by the depolarizing noise model. Once all the error locations are fixed, we propagate the errors through the exRec and verify the output for a logical error. This procedure consists of one simulation. Recall that for $N$ simulations of a logical gate $G$ at a physical error rate $p$, the probability of a given logical fault for an error of type $E$ is estimated by
\begin{align}
\text{Pr}[\text{mal}_E | G,  p] = \frac{a_E}{N}
\label{eq:level1boundingAppendix}
\end{align} 
where $a_E$ is the number of malignant faults for an error of type $E$. In choosing $N=10^7$ simulations, we obtained statistical error deviations ranging between $10^{-7}$ to $10^{-5}$ which we felt were adequate for our estimates in the threshold values (see the uncertainty relations obtained in Table~\ref{tab:Pseudo-and-asymptotic}). 

\begin{figure}[h]
\centering
\begin{align*}
\Qcircuit @C=1.6em @R=1em {
& \gate{\text{EC}} & \gate{G_1} & \gate{\text{EC}} & \gate{G_2} & \gate{\text{EC}} & \qw 
\gategroup{1}{2}{1}{4}{0.7em}{--}
\gategroup{1}{4}{1}{6}{1.5em}{--}
}
\end{align*}
\caption{An example of shared EC's between two consecutive level-one exRecs}
\label{fig:OverlappingECs}
\end{figure}
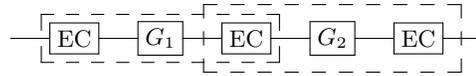

\subsection{Error type analysis}

In computing the probability of obtaining a logical error $E$ at the second level of concatenation for an error rate $p$ and gate $G$ ($\text{Pr}[\text{mal}_E^{(2)}| G,p]$), each level-one exRecs in the level-two circuit was treated as a physical independent location with a redefined noise model given by the polynomials $\Gamma_{G,E}^{(1)}(p)$. For example, a level-one CNOT gate in a level-two simulation would be treated as a physical CNOT gate. A two-qubit Pauli error would be inserted with a probability upper bounded by the polynomials obtained in Eq.~\ref{eq:Gammal} with $l=2$ instead of the probability arising from the depolarizing noise model. To be consistent with Eq.~\ref{eq:AsymptoticCond}, the notation is chosen such that $\Gamma_G^{(l)}$ is the upper bounding polynomial at level-$l$ for all error types $E$ whereas $\Gamma_{G,E}^{(l)}$ is the upper bounding polynomial for the particular error type $E$.

As can be seen in Fig.~\ref{fig:OverlappingECs}, a level-two simulation will typically contain many level-one exRecs with overlapping EC’s and so it is not entirely correct to treat them independently. If two level-one exRecs share EC’s, then the rectangle that precedes the other one is replaced with a faulty gate only if it is still incorrect after the shared EC’s have been removed. As was shown in Ref.~\cite{PR12}, we must calculate ~\ref{eq:level1boundingAppendix} for both complete and incomplete exRecs where one or more TEC’s have been removed and take the polynomial that bounds all cases. See Fig.~\ref{fig:malIZECRem} for an example. We would also like to point out that for single qubit gates, exRec's without a TEC always had a lower probability of obtaining a logical fault (for any error type) compared to the case where the TEC was kept. This can be understood from the fact that the TEC adds more locations and hence more ways for errors to be introduced at the output of the circuit.  

\begin{figure}[h]
\centering
\begin{subfigure}{0.4\textwidth}
\includegraphics[width=\textwidth]{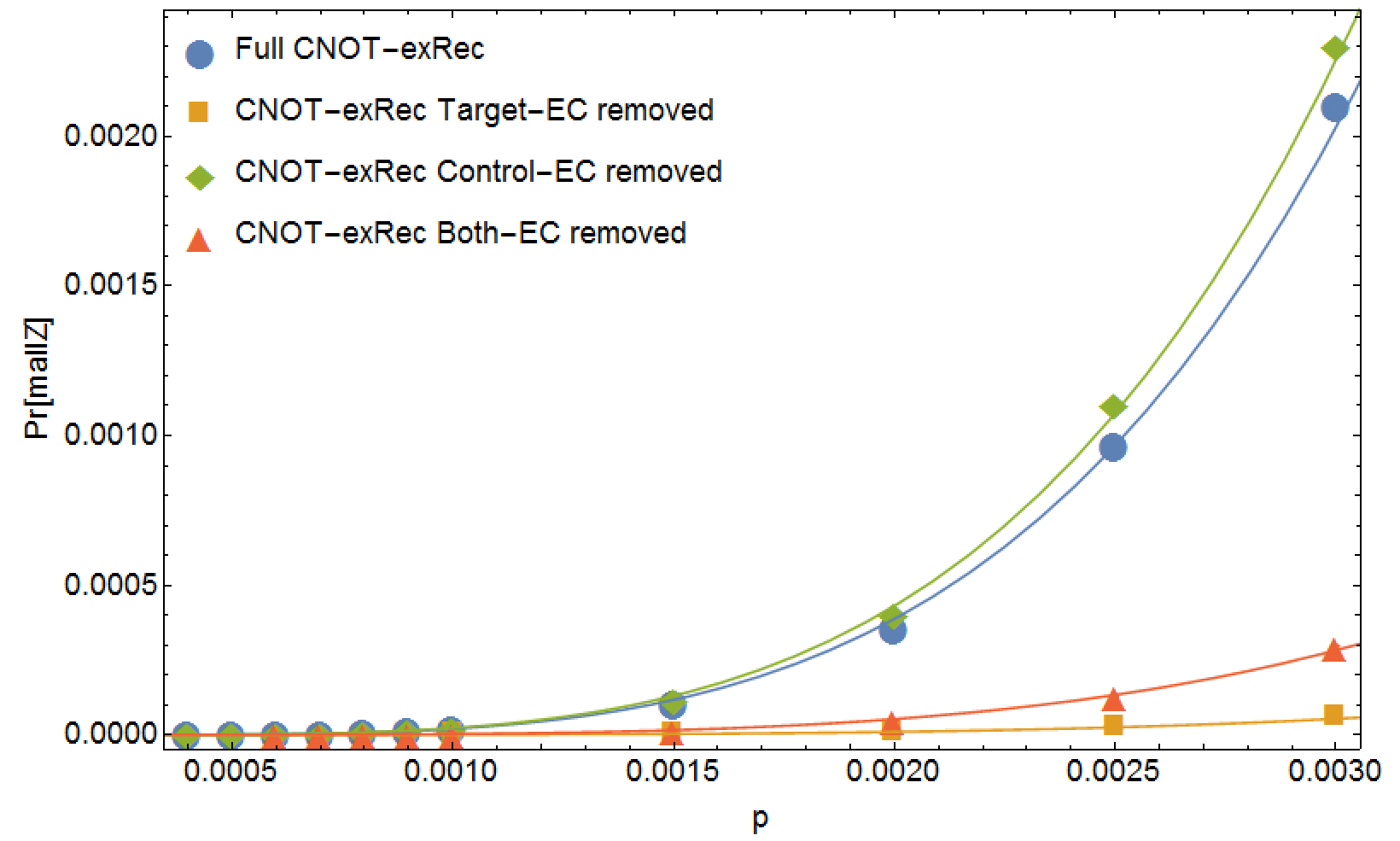}
\caption{}
\label{fig:malIZECRem}
\end{subfigure}
\begin{subfigure}{0.4\textwidth}
\includegraphics[width=\textwidth]{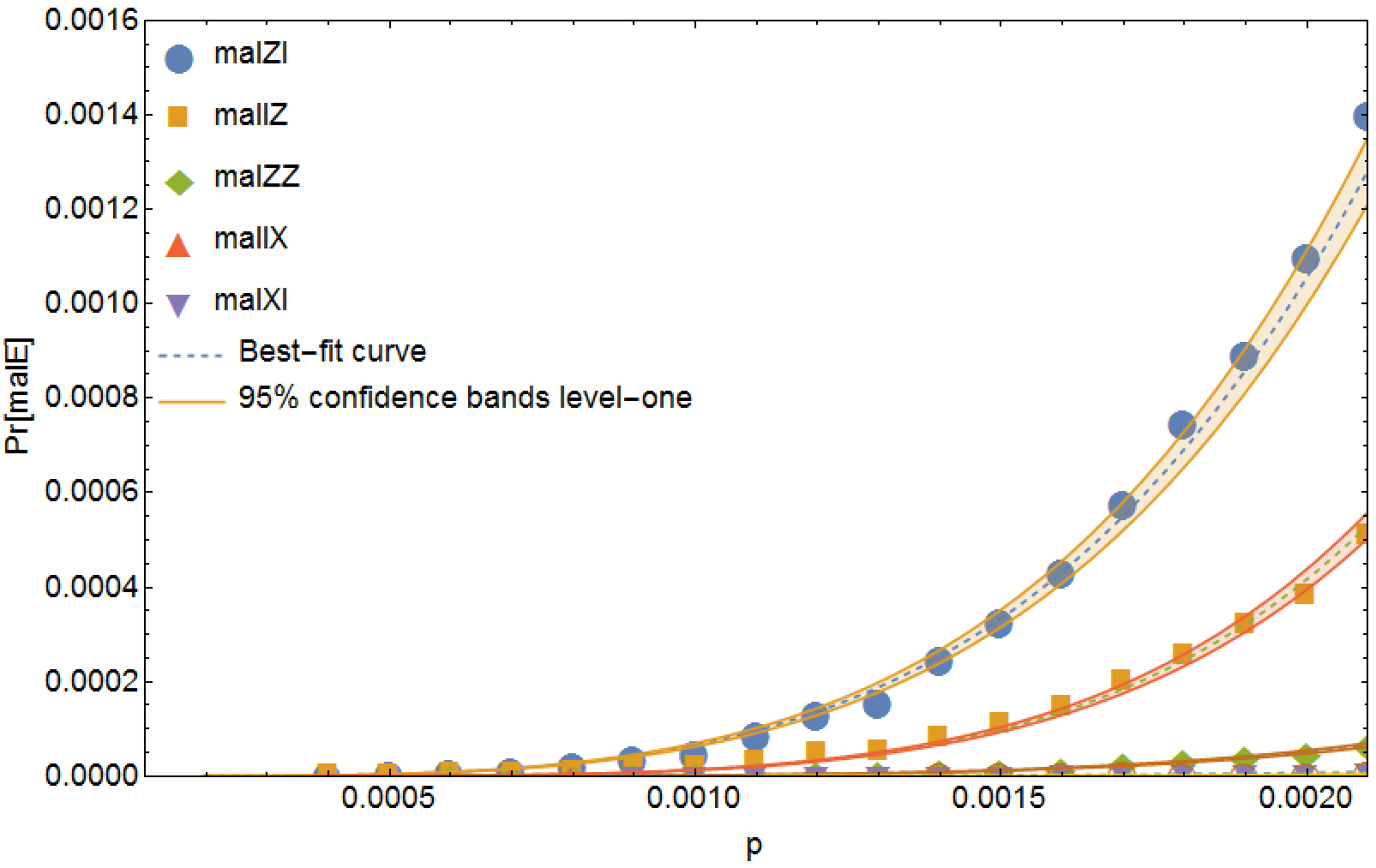}
\caption{}
\label{fig:LevelOneCNOT}
\end{subfigure}
\caption{\subref{fig:malIZECRem}~Polynomials upper bounding the event $\text{mal}_{IZ}$ for either the full level-one CNOT exRec or the level-one CNOT exRec with one or both TEC's removed. The polynomial upper bounding the event $\text{mal}_{IZ}$ will upper bound all the curves in the above figure. \subref{fig:LevelOneCNOT}~Polynomials upper bounding the level-one CNOT exRec for the different logical error types.}
\label{fig:CNOTPseudo}
\end{figure}

The polynomials in Fig.~\ref{fig:LevelOneCNOT} upper bound the probability of obtaining a logical error at the first level of concatenation of the CNOT exRec. Each curve corresponds to a different error type (error types that are not displayed occur with a probability less than $10^{-7}$ for all sampled physical error rates). Note that the upper bounds on logical $Z$ malignant events are significantly higher than their $X$ counterpart. As mentioned in the main text, this is primarily due to the fact that the 15-qubit Reed-Muller code offers better protection against $X$ errors.

\begin{figure}[h]
\centering
\begin{subfigure}{0.4\textwidth}
\includegraphics[width=\textwidth]{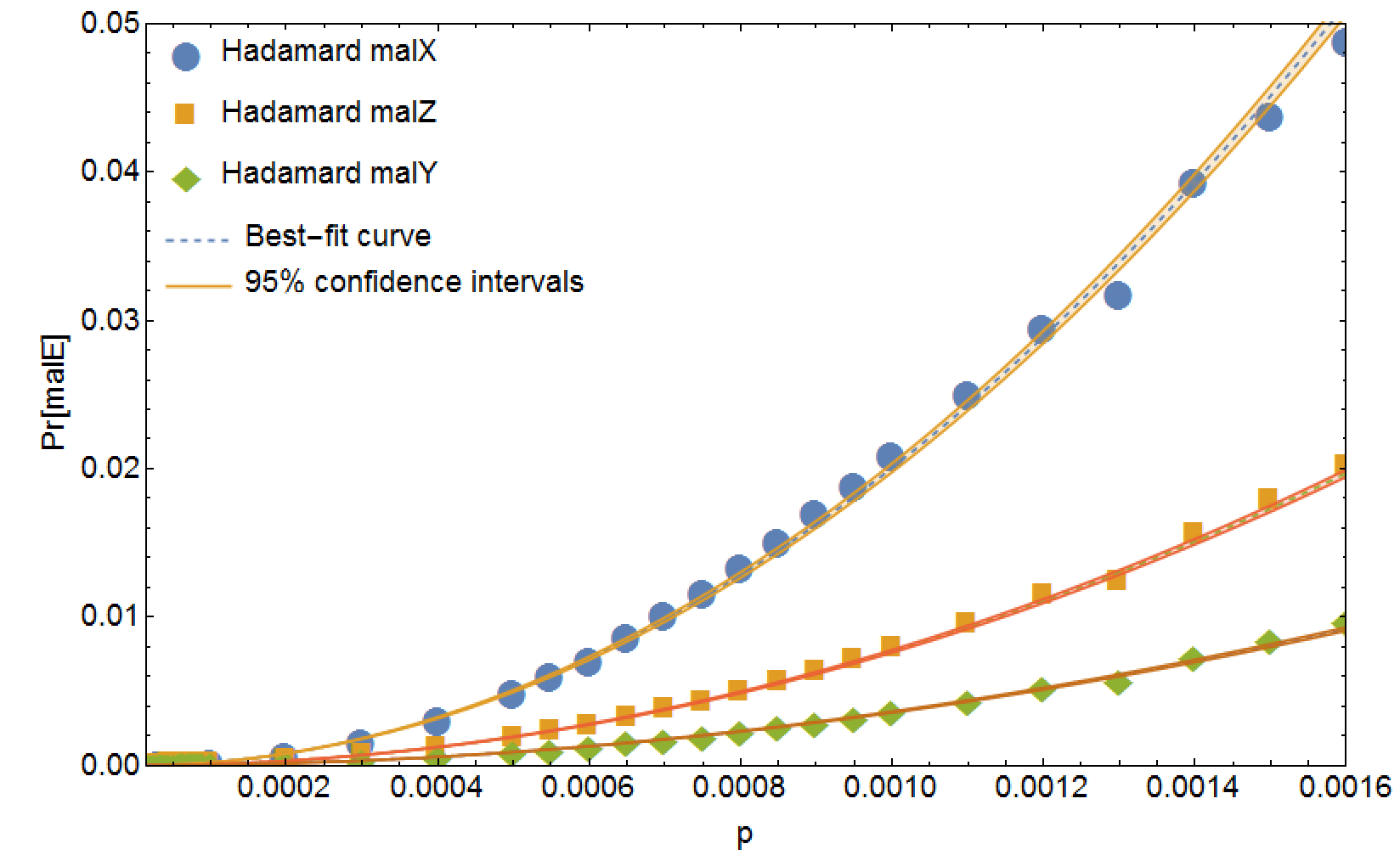}
\caption{}
\label{fig:LevelOneHad}
\end{subfigure}
\begin{subfigure}{0.4\textwidth}
\includegraphics[width=\textwidth]{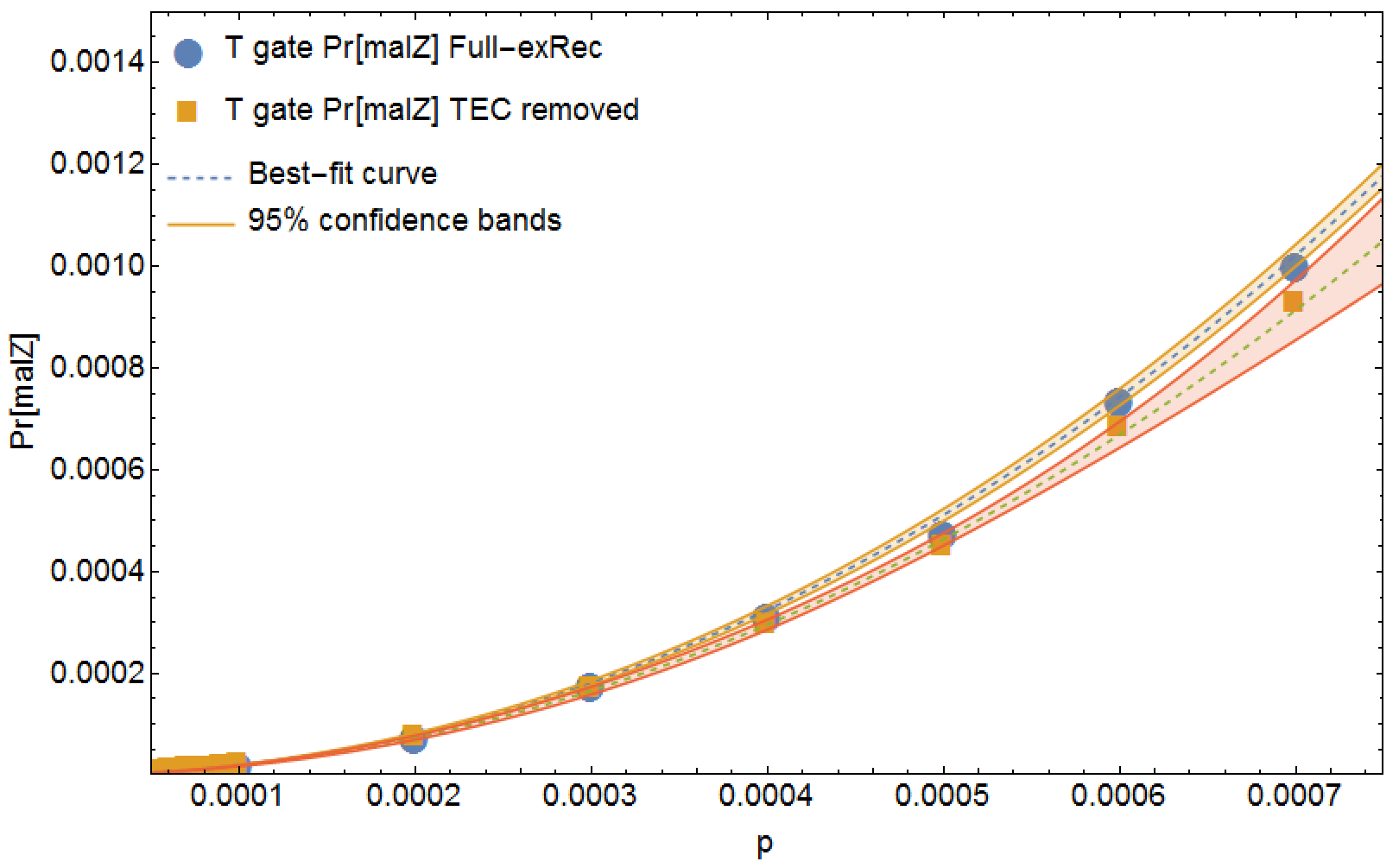}
\caption{}
\label{fig:LevelOneTgate}
\end{subfigure}
\caption{\subref{fig:LevelOneHad}~Polynomials upper bounding the events $\text{mal}_{X}$, $\text{mal}_{Z}$ and $\text{mal}_{Y}$ for the level-one Hadamard exRec. Input $Z$ errors are most likely to result in a logical $X$ error on a 15-qubit codeblock which explains why the event $\text{mal}_{X}$ is most likely to occur. \subref{fig:LevelOneTgate}~Polynomials upper bounding the event $\text{mal}_{Z}$ for the level-one T gate exRec. Note that the logical error probabilities for the event $\text{mal}_{X}$ and $\text{mal}_{Y}$ are too small to be displayed.}
\label{fig:SingleGatesPseudo}
\end{figure}

The polynomials of Fig.~\ref{fig:LevelOneHad} upper bound the probability of obtaining a logical $X$, $Z$ or $Y$ error for the level-one Hadamard exRec. As was explained in section \ref{Concatenated 105-qubit scheme} of the main text, the circuit encoding the logical $H$ on the 15-qubit codeblocks is very sensitive to input $Z$ errors. Any $Z$ error propagating throught the target qubit of the CNOT gates (prior to the application of the physical $H$ on qubit 4, see Fig.~\ref{fig:HadCircuit}) will result in a logical $X$ error on the 15-qubit codeblock. The latter is the main reason for a larger upper bound on the event $\text{mal}_{X}$ compared to the event $\text{mal}_{Z}$ even though the 15-qubit Reed-Muller code provides better protection against $X$ errors.

\subsection{Threshold analysis}

The pseudo-threshold values for a gate $G$ given in Table~\ref{tab:Pseudo-and-asymptotic} were obtained from the crossing point between the physical error rate $p$ and the curves $p_G^{(1)}(p)=\sum_{E_i} \text{Pr}[\text{mal}_{E_i} | G, p] $ for all possible logical Pauli errors~$E_i$. The plots on the left column of Fig.~\ref{fig:PseudoAndLevelThreeThresholds} illustrates the crossing point for the logical CNOT, Hadamard and $T$ gate. The CNOT gate has the largest pseudo-threshold value of $\left(2.11\pm0.02\right)\times10^{-3}$ due to the double protection from the CSS 7-qubit Steane code and the 15-qubit Reed-Muller code. On the other hand, the Hadamard gate has the lowest pseudo-threshold value of $\left(4.47\pm0.29\right)\times10^{-5}$ due to the sensitivity of the encoding circuit on the 15-qubit codeblocks to input $Z$ errors. 

Following Eq.~\ref{eq:AsymptoticCond}, a lower bound for the asymptotic threshold value for a particular gate $G$ is given by the the intersection between the polynomials upper bounding the probability of obtaining a logical error $E$ at the first and second level of concatenation ($\Gamma_{G,E}^{(1)}$ and $\Gamma_{G,E}^{(2)}$). Note that the error type $E$ in the asymptotic threshold calculation is chosen such that the intersection between $\Gamma_{G,E}^{(1)}$ and $\Gamma_{G,E}^{(2)}$ occurs at the smallest physical error rate. For the CNOT gate (Fig.~\ref{fig:b}) this is given by~$E=ZI$, for the Hadamard gate (Fig.~\ref{fig:d}) it is~$E=X$ and for the $T$ gate (Fig.~\ref{fig:f}) it is~$E=Z$.

An interesting feature can be observed from the plots on the right column of Fig.~\ref{fig:PseudoAndLevelThreeThresholds}. Notice that the polynomial upper bounding the event $\text{mal}_{ZI} $ at the third level of concatenation for the logical CNOT gate $\Gamma_{CNOT,ZI}^{(3)}(p)$ intersects $\Gamma_{CNOT,ZI}^{(1)}(p)$ at the asymptotic threshold value $\left(1.95\pm0.01\right)\times10^{-3}$. The reason is that for higher error rates than the asymptotic threshold value, the level-two CNOT exRecs (in the level three simulation) are more likley to fail than the level-one CNOT exRecs (in the level two simulation). Consequently, there is a higher probability of obtaining a logical fault at the output of the CNOT exRec. However, for the logical $H$ and $T$ gate exRecs, $\Gamma_{G,E}^{(3)}(p)$ intersects $\Gamma_{G,E}^{(1)}(p)$ at an error rate which is larger than the asymptotic threshold value for these particular gates ($\left(1.28\pm0.02\right)\times10^{-3}$ for $H$ and $\left(1.58\pm0.01\right)\times10^{-3}$ for $T$). Consider the logical Hadamard gate (the following argument applies equally well to the $T$ gate). For error rates $p$ that are between the $H$ and CNOT asymptotic threshold values, $\left(1.28\pm0.02\right)\times10^{-3}\leq p\leq\left(1.95\pm0.01\right)\times10^{-3}$, the level-two Hadamard exRecs in the level-three simulation will be more likely to fail than at the previous level of concatenation. However, this will be compensated by all of the level-two CNOT exRecs in the level-three simulation which will be less likely to fail than at the previous level (since $p$ is below the CNOT asymptotic threshold value). Above the error rate where $\Gamma_{H,X}^{(3)}(p)$ intersects $\Gamma_{H,X}^{(1)}(p)$ ($p=1.44\times10^{-3}$), the level-two Hadamard exRecs will be noisy enough such that the probability of obtaining a logical $X$ error will be larger than at the previous level. Therefore, by considering the crossing points of the logical error rates for higher concatenation levels, a better lower-bound for the asymptotic threshold can be established. However, in order to fairly compare the performance of the concatenated scheme with the Golay code~\cite{PR12}, we emphasized the lower bound obtained from the crossing point of the first and second concatenation levels.

\section{Lower bound on asymptotic threshold}
\label{app:thresholdlowerbound}
\setcounter{equation}{1} 

We review how we arrived at Eq.~\ref{eq:asymptoticthresh} and how this result leads to a lower bound on the asymptotic threshold for the code in question. We prove a more general result for the exponential suppression of error rates as a function of concatenation levels given the presence of a crossing point of the upper bounding polynomials of the error rate at consecutive concatenation levels. 

\begin{lemma}
Suppose the error rate of a logical gate~$G$ at the $l$-th concatenation level can be upper bounded as follows:
\begin{align}
\text{Pr}[\text{mal}_E^{(l)}| G,p] \le \Gamma_G^{(l)} = \sum_{k = \ceil{\frac{d^*}{2}}}^{L_G} c(k) \left(\Gamma_G^{(l-1)}\right)^k.
\label{eq:polyexpansion}
\end{align}
If the upper bounding error polynomial satisfies the following $\Gamma^{(l)} \le \epsilon \Gamma^{(l-1)}$ for $0 \le \epsilon \le 1$, then the following holds:
\begin{align}
\text{Pr}[\text{mal}_E^{(m)}| G,p] \le \Gamma_G^{(m)} \le \epsilon^{\sum_{r=0}^{m-l} \ceil{\frac{d^*}{2}}^r} \Gamma_G^{(l-1)},
\label{eq:expsuppression} 
\end{align}
where $m>l$, and $d^*$ is the minimal distance of the encoded state throughout the logical application of the gate~$G$.
\end{lemma}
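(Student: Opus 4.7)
The plan is to prove the inequality by induction on $m$, starting from the base case $m=l$ (where the bound reduces to the hypothesis $\Gamma_G^{(l)}\le\epsilon\,\Gamma_G^{(l-1)}$ with exponent $\sum_{r=0}^{0}\ceil{d^*/2}^r = 1$). The whole argument rests on three ingredients already built into the setup: (i) the recursive form $\Gamma_G^{(m+1)} = \sum_{k\ge \ceil{d^*/2}} c(k)\bigl(\Gamma_G^{(m)}\bigr)^k$, (ii) the monotonicity of this polynomial in its argument (all coefficients $c(k)$ are nonnegative), and (iii) the fact that $\epsilon\in[0,1]$, so higher powers of $\epsilon$ are smaller.

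For the inductive step, assume
\begin{equation*}
\Gamma_G^{(m)} \le \epsilon^{S_{m-l}} \Gamma_G^{(l-1)}, \qquad S_j := \sum_{r=0}^{j}\ceil{\tfrac{d^*}{2}}^r.
\end{equation*}
Substituting into the recursion and using monotonicity yields
\begin{equation*}
\Gamma_G^{(m+1)} \le \sum_{k\ge \ceil{d^*/2}} c(k)\,\epsilon^{k S_{m-l}}\bigl(\Gamma_G^{(l-1)}\bigr)^k.
\end{equation*}
Since every $k$ in the sum satisfies $k\ge \ceil{d^*/2}$ and $\epsilon\le 1$, one can pull out the smallest power $\epsilon^{\ceil{d^*/2}\,S_{m-l}}$, leaving behind exactly the polynomial $\Gamma_G^{(l)}$ evaluated at $\Gamma_G^{(l-1)}$. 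Applying the hypothesis $\Gamma_G^{(l)}\le\epsilon\,\Gamma_G^{(l-1)}$ one more time gives
\begin{equation*}
\Gamma_G^{(m+1)} \le \epsilon^{\ceil{d^*/2}\,S_{m-l}+1}\,\Gamma_G^{(l-1)}.
\end{equation*}

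The final step is a short piece of exponent bookkeeping: observing that $\ceil{d^*/2}\,S_j + 1 = \sum_{r=0}^{j}\ceil{d^*/2}^{r+1}+1 = \sum_{r=0}^{j+1}\ceil{d^*/2}^{r} = S_{j+1}$, which advances the induction from $m$ to $m+1$ and matches the stated bound. The geometric-series identity is what makes the exponent grow doubly fast in $m$ when $\ceil{d^*/2}\ge 2$, recovering the super-exponential error suppression that justifies calling the crossing point a lower bound on the asymptotic threshold.

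The main subtlety, rather than an obstacle, is justifying the factorization step: one must verify that the same hypothesis $\Gamma_G^{(l)}\le\epsilon\,\Gamma_G^{(l-1)}$ can be reused at higher levels, which in turn requires that no new malignant-event contributions appear once we replace physical locations by level-$(l-1)$ exRecs. This is guaranteed by the paper's self-similar construction of $\Gamma_G^{(m)}$ from the fixed coefficients $c(k)$, so the induction carries through without needing any new combinatorial input.
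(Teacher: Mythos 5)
Your proof is correct and follows essentially the same route as the paper's: induction on $m$, pulling out the minimal power $\epsilon^{\ceil{d^*/2}\,S_{m-l}}$ using $c(k)=0$ for $k<\ceil{d^*/2}$ and $\epsilon\le 1$, recognizing the remainder as $\Gamma_G^{(l)}$, applying the hypothesis $\Gamma_G^{(l)}\le\epsilon\,\Gamma_G^{(l-1)}$ once more, and closing with the geometric-sum identity $\ceil{d^*/2}\,S_j+1=S_{j+1}$. The only cosmetic difference is that you anchor the induction at $m=l$ while the paper works out $m=l+1$ explicitly as its base case.
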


\begin{proof}
We shall show this result by induction. Therefore, consider first the case of $m = l+1$. By definition $\text{Pr}[\text{mal}_E^{(l+1)}| G,p]) \le \Gamma_G^{(l+1)}$, for all logical errors~$E$. In order to show the right side of the inequality given in Eq.~\ref{eq:expsuppression} consider the expansion of $\Gamma_G^{(l+1)}$ as a sum over failures of the gates at the $(l)$-th~level, and use the claim that~$\Gamma_G^{(l)} \le \e \Gamma_G^{(l-1)}$.
\begin{align*}
\Gamma_G^{(l+1)} &= \sum_k c(k) \left( \Gamma_G^{(l)} \right)^k \\
&\le \sum_k c(k) \left( \epsilon \Gamma_G^{(l-1)} \right)^k \\
&= \e^{\ceil{\frac{d^*}{2}}} \sum_k c(k) \epsilon^{k - \ceil{\frac{d^*}{2}}}  \left( \Gamma_G^{(l-1)} \right)^k \\
&\le \e^{\ceil{\frac{d^*}{2}}} \sum_k c(k) \left( \Gamma_G^{(l-1)} \right)^k \\
&= \e^{\ceil{\frac{d^*}{2}}} \Gamma_G^{(l)} \\
&\le \e^{\ceil{\frac{d^*}{2}}+1} \Gamma_G^{(l-1)}
\end{align*}
We used the fact that all of the $c(k)$~coefficients in the expansion are positive and due to the fault-tolerance of the logical gates, errors of order smaller than~$\ceil{d^*/2}$ are correctable and therefore $c(k) = 0 \ \forall \  k < \ceil{d^*/2}$.

To complete the proof, we assume the induction hypothesis for level~$m$ and show for level~$(m+1)$:
\begin{align*}
\Gamma_G^{(m+1)} &= \sum_k c(k) \left( \Gamma_G^{(m)} \right)^k \\
&\le \sum_k c(k) \left( \epsilon^{\sum_{r=0}^{m-l} \ceil{\frac{d^*}{2}}^r} \Gamma_G^{(l-1)} \right)^k \\
&\le \epsilon^{ \ceil{\frac{d^*}{2}} \sum_{r=0}^{m-l} \ceil{\frac{d^*}{2}}^r} \sum_k c(k) \left( \Gamma_G^{(l-1)} \right)^k \\
&= \epsilon^{  \sum_{r=1}^{m+1-l} \ceil{\frac{d^*}{2}}^r} \sum_k c(k) \left( \Gamma_G^{(l-1)} \right)^k \\
&= \epsilon^{  \sum_{r=1}^{m+1-l} \ceil{\frac{d^*}{2}}^r}  \Gamma_G^{(l)}  \\
&\le \epsilon^{  \sum_{r=0}^{m+1-l} \ceil{\frac{d^*}{2}}^r} \Gamma_G^{(l-1)} ,
\end{align*}
thus completing the induction proof.
\end{proof}

It should be noted that the shift in the crossing point for different concatenation levels in the logical $H$ and $T$~gate (Figs.~\ref{fig:d} and~\ref{fig:f}) may at first glance violate the assumption that the polynomial coefficients~$c(k)$ are the same at all levels. However, one of the assumptions of the polynomials were that the logical error rate of all locations at the previous level have the same error rate, and thus contribute equally in a potential error chain. The fact that CNOT is in fact less noisy than other gates in the regime between the $H$ (and $T$) pseudo-threshold and asymptotic CNOT threshold means that certain error chains are further suppressed and as such the logical error rate is lower than the worst case bound set by the polynomials. The CNOT crossing points (Fig.~\ref{fig:b}) are uniform across all levels, indicating that the true logical error rate is very close to the worst-case bound.

\begin{figure*}[htbp]
\begin{center}
\begin{subfigure}{0.4\textwidth}
\includegraphics[width = \textwidth]{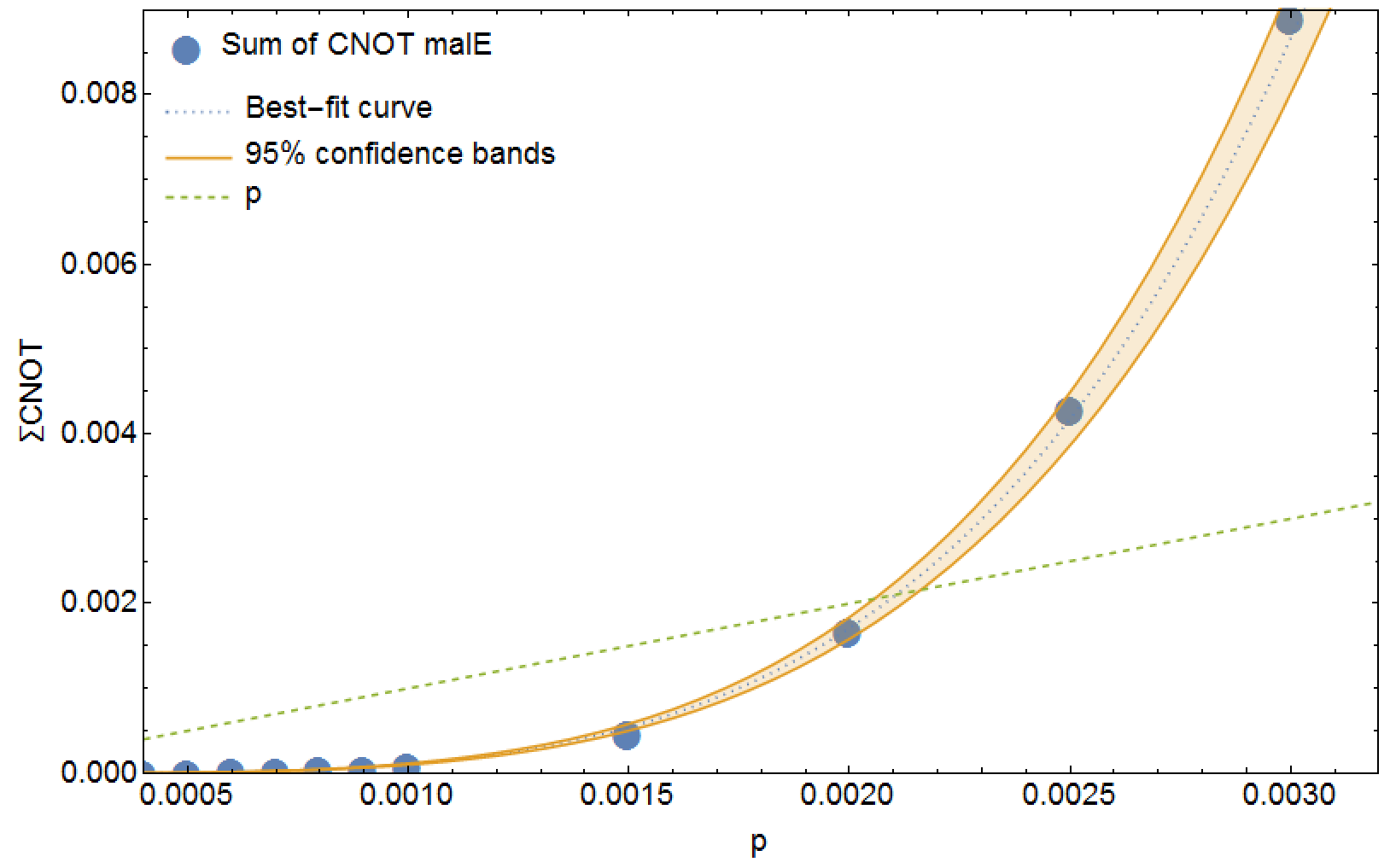}
\caption{}
\label{fig:a}
\end{subfigure}
\begin{subfigure}{0.4\textwidth}
\includegraphics[width =\textwidth]{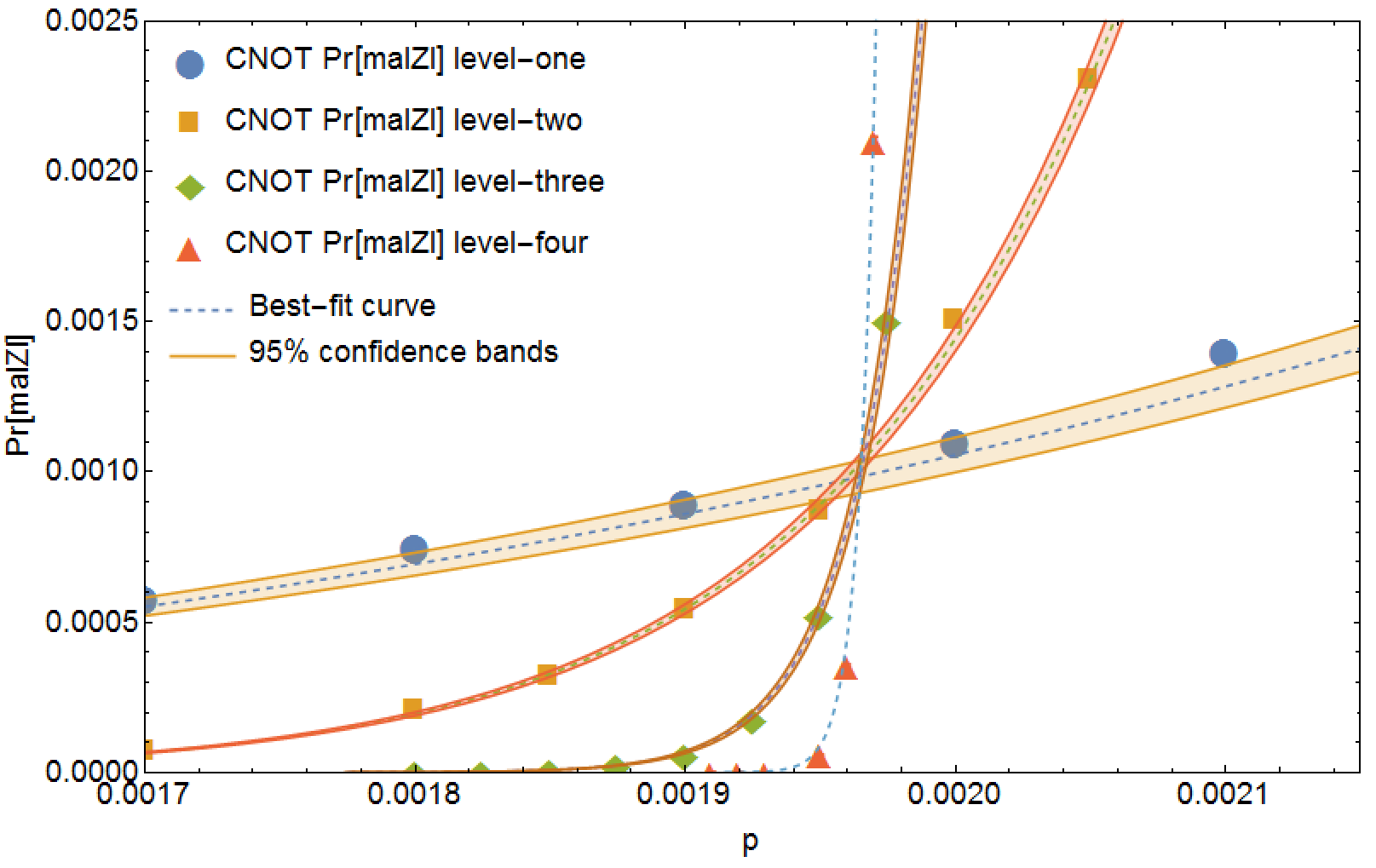}
\caption{}
\label{fig:b}
\end{subfigure}
\begin{subfigure}{0.42\textwidth}
\includegraphics[width = \textwidth]{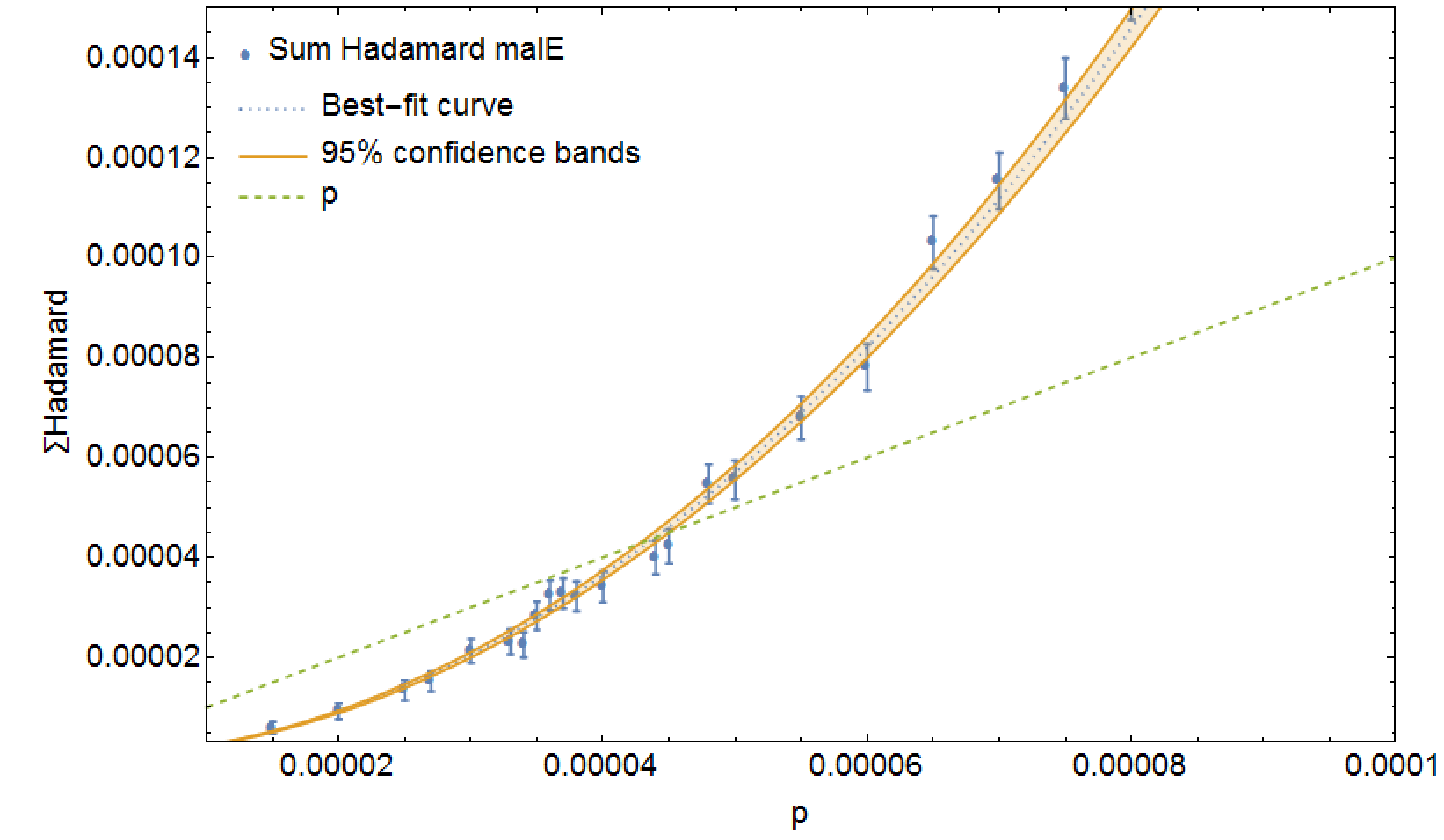}
\caption{}
\label{fig:c}
\end{subfigure}
\begin{subfigure}{0.4\textwidth}
\includegraphics[width = \textwidth]{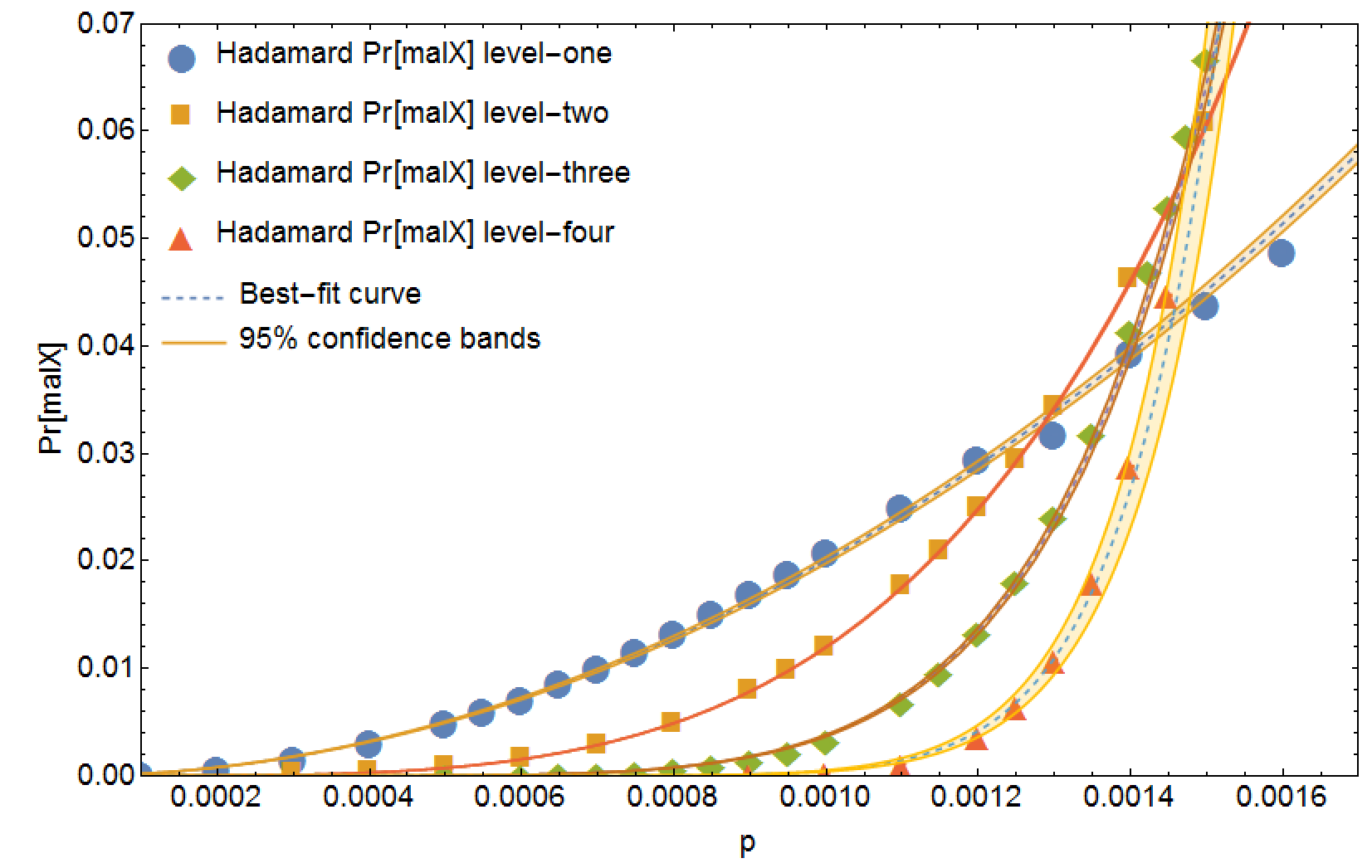}
\caption{}
\label{fig:d}
\end{subfigure}
\begin{subfigure}{0.42\textwidth}
\includegraphics[width = \textwidth]{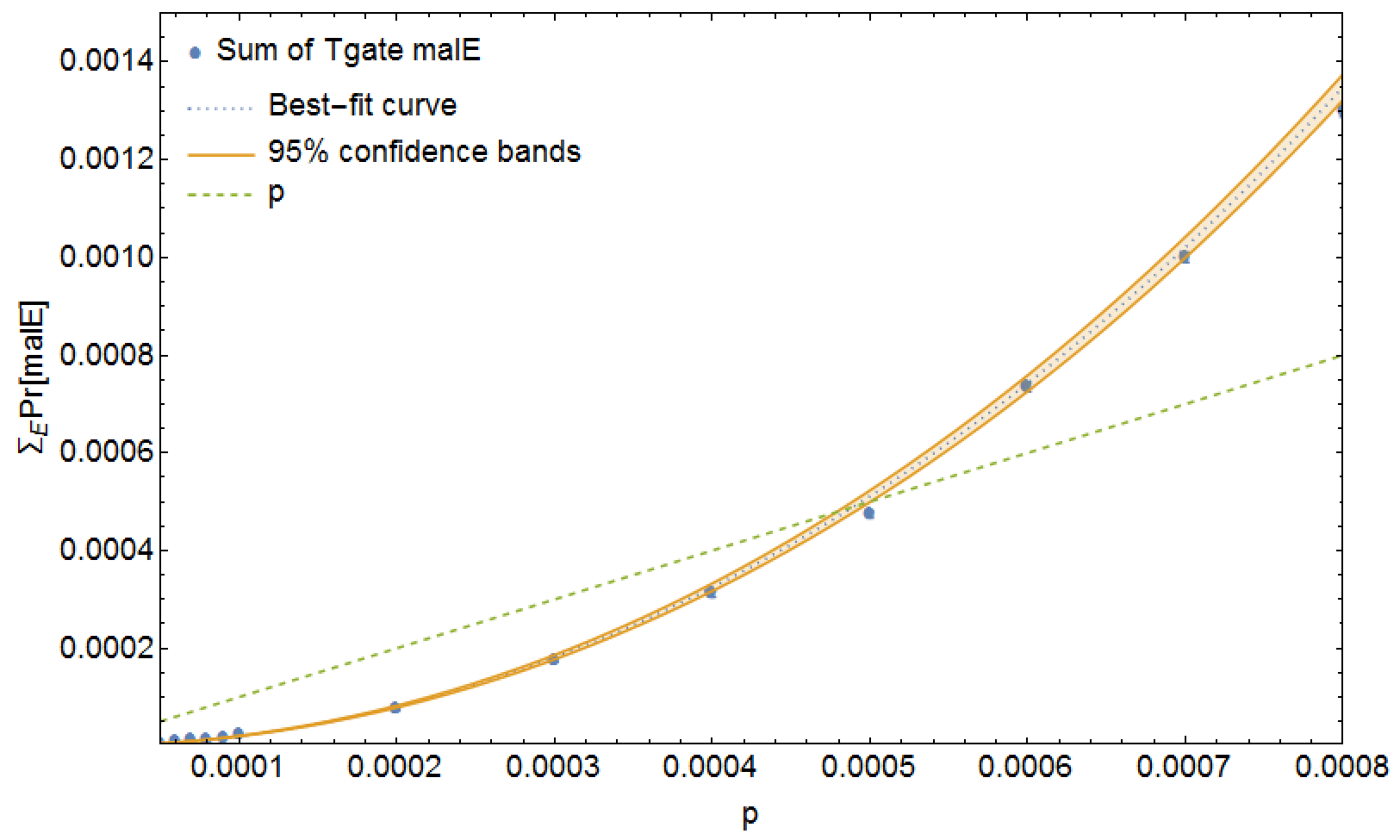}
\caption{}
\label{fig:e}
\end{subfigure}
\begin{subfigure}{0.4\textwidth}
\includegraphics[width = \textwidth]{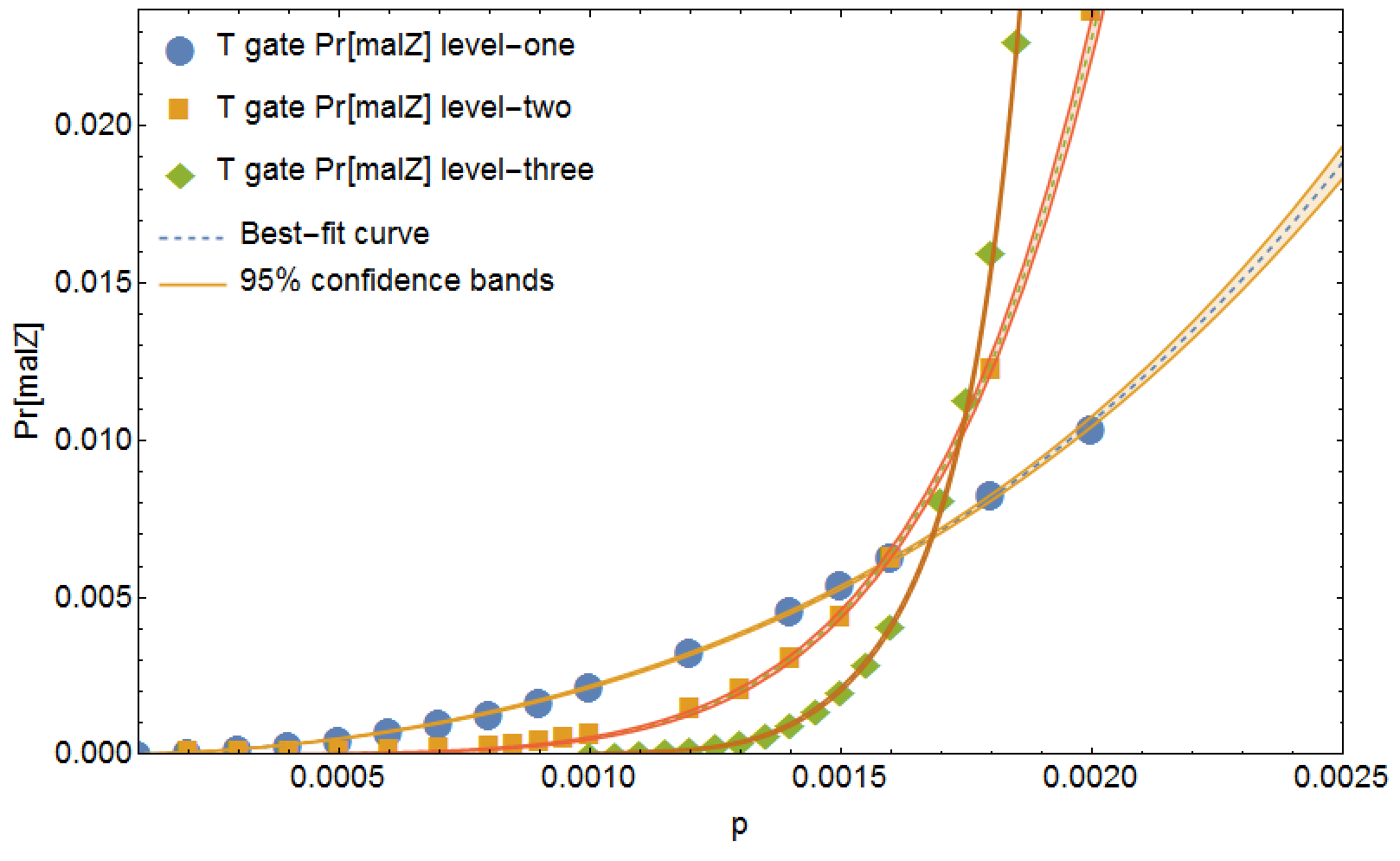}
\caption{}
\label{fig:f}
\end{subfigure}
\caption{The plots on the left column illustrate the probability of logical error as function of physical error rate for logical \subref{fig:a}~CNOT, \subref{fig:c}~Hadamard and \subref{fig:e}~$T$~gate. The crossing point of the fitted curve allows for the determination of the level-1 pseudo-threshold for each of the logical gates. The CNOT pseudo-threshold is the largest among all three gates due to the double protection of the 7-qubit and 15-qubit code. The plots on the right column illustrate the polynomials upper bounding the probability of obtaining a logical error $E$ for the first, second and third level of concatenation. The crossing point between the level-one and level-two polynomials determine the asymptotic threshold for the gate under consideration. For the logical CNOT gate~\subref{fig:b}, it is the event $\text{mal}_{ZI}$ which limits the threshold value. For the logical gate $H$~\subref{fig:d}, $\text{mal}_{X}$ limits the threshold value. Lastly, for the logical $T$~gate~\subref{fig:f}, $\text{mal}_{Z}$ limits the threshold value.}
\label{fig:PseudoAndLevelThreeThresholds}
\end{center}
\end{figure*}

\end{document}